\newtheorem{theorem}{Theorem}[]
\theoremstyle{definition}
\newtheorem{definition}[theorem]{Definition}
\theoremstyle{plain}
\newtheorem{claim}[theorem]{Claim}
\newtheorem{proposition}[theorem]{Proposition}
\newtheorem{observation}[theorem]{Observation}
\DeclareMathOperator*\Prob{\bf Pr}
\renewcommand{\P}{\mathsf{P}}
\newcommand{\ignore}[1]{}
\newcommand{\prn}[1]{\left(#1\right)}
\newcommand{\cprn}[1]{\!\left(#1\right)}
\newcommand{\sqbra}[1]{\left[#1\right]}
\newcommand{\csqbra}[1]{\!\left[#1\right]}
\newcommand{\abs}[1]{\left|#1\right|}
\newcommand{\brkts}[1]{\left\{#1\right\}}
\newcommand{\bool}{\brkts{0,1}}
\newcommand{\boool}{\brkts{-1,1}}
\newcommand{\angles}[1]{\left\langle#1\right\rangle}
\newcommand{\cpoly}[1]{\textup{poly}\cprn{#1}}
\newcommand{\cprq}[2]{\Prob_{#2}\csqbra{#1}}
\newcommand{\Bern}{\mathrm{Bern}}
\newcommand{\R}{\mathbb{R}}
\newcommand{\NP}{\mathsf{NP}}
\newcommand{\RP}{\mathsf{RP}}
\newcommand{\dtv}{d_{\mathrm{TV}}}
\newcommand{\SZK}{\mathsf{SZK}}
\newcommand{\NISZK}{\mathsf{NISZK}}
\title{\bf Computational Explorations of Total Variation Distance}
\author{
Arnab Bhattacharyya\thanks{Work done while the author was affiliated with the National University of Singapore.} \\
University of Warwick
\and
Sutanu Gayen \\
IIT Kanpur
\and
Kuldeep S. Meel \\
University of Toronto \\
Georgia Institute of Technology
\and
Dimitrios Myrisiotis \\
CNRS@CREATE LTD.
\and
A. Pavan \\
Iowa State University
\and
N. V. Vinodchandran \\
University of Nebraska-Lincoln
}
\begin{document}

\maketitle

\begin{abstract}
We investigate some previously unexplored (or underexplored) computational aspects of total variation (TV) distance.
First, we give a simple deterministic polynomial-time algorithm for checking equivalence between mixtures of product distributions, over arbitrary alphabets.
This corresponds to a special case, whereby the TV distance between the two distributions is zero.
Second, we prove that unless $\mathsf{NP} \subseteq \mathsf{RP}$, it is impossible to efficiently estimate the TV distance between arbitrary Ising models, even in a bounded-error randomized setting.
\end{abstract}

\section{Introduction}

\label{sec:introduction}

The total variation (TV) distance between distributions $P$ and $Q$ over a common sample space $D$, denoted by $\dtv(P, Q)$, is defined as
\[
\dtv(P,Q)
:= \max_{S \subseteq D} \cprn{P(S)-Q(S)}
= \frac12 \sum_{x \in D} |P(x) - Q(x)|
= \sum_{x \in D} \max \cprn{0, P(x) - Q(x)}.
\]
The TV distance satisfies many basic properties, making it a versatile and fundamental measure for quantifying the dissimilarity between probability distributions.
First, it has an explicit probabilistic interpretation:
The TV distance between two distributions is the maximum gap between the probabilities assigned to a single event by the two distributions.
Second, it satisfies many mathematically desirable properties:
It is bounded and lies in $[0, 1]$, it is a metric, and it is invariant with respect to bijections.
Third, it satisfies some interesting composability property.
Given $f(g_1, g_2, \dots, g_n)$, suppose that we replace $g_2$ with $g'_2$ such that $\dtv(g_2, g'_2) \leq \varepsilon$.
Then the TV distance between $f(g_1, g_2, \dots, g_n)$ and $f(g_1, g'_2, \dots,  g_n)$ is at most $\varepsilon$.
Because of these reasons, the total variation distance is a central distance measure employed in a wide range of areas including probability and statistics~\cite{mitzenmacher2005probability}, machine learning~\cite{shwartz2014understanding}, and information theory~\cite{cover2006elements}, cryptography~\cite{stinson1995cryptography}, data privacy~\cite{dwork2006differential}, and pseudorandomness~\cite{vadhan2012pseudorandomness}.

Lately, the computational aspects of TV distance have attracted a lot of attention.
\cite{SV03} established, in a seminal work, that additively approximating the TV distance between two distributions that are samplable by Boolean circuits is hard for $\SZK$ (Statistical Zero Knowledge).
The complexity class $\SZK$ is fundamental to cryptography and is believed to be computationally hard.
Subsequent works captured variations of this theme~\cite{GoldreichSV99, Malka15, DPV20}:
For example,~\cite{GoldreichSV99} showed that the problem of deciding whether a distribution samplable by a Boolean circuit is close or far from the uniform distribution is complete for $\NISZK$ (Non-Interactive Statistical Zero Knowledge).
Moreover, \cite{CortesMR07, LyngsoP02, hmm18} showed that it is undecidable to check whether the TV distance between two hidden Markov models is greater than a threshold or not, and that it is $\#\P$-hard to additively approximate it.
Finally, \cite{bhattacharyya2023approximating} showed that $(a)$ exactly computing the TV distance between product distributions is $\#\P$-complete, and $(b)$ multiplicatively approximating the TV distance between Bayes nets is $\NP$-hard.

On an algorithmic note, \cite{0001GMV20} designed efficient algorithms to additively approximate the TV distance between distributions efficiently samplable and efficiently computable (including the case of \emph{ferromagnetic} Ising models).
In particular, they designed efficient algorithms for additively approximating the TV distance of structured high dimensional distributions such as Bayesian networks, Ising models, and multivariate Gaussians.
In a similar vein, \cite{PM21} studied a related property testing variant of TV distance for distributions encoded by circuits.
Multiplicative approximation of TV distance has received less attention compared to additive approximation.
Recently, \cite{bhattacharyya2023approximating} gave an FPTAS for estimating the TV distance between an arbitrary product distribution and a product distribution with a bounded number of distinct marginals.
\cite{feng2023simple} designed an FPRAS for multiplicatively approximating the TV distance between two arbitrary product distributions and \cite{feng2024deterministically} gave an FPTAS for the same task.
Finally, \cite{bhattacharyya2024meets} gave an FPRAS for estimating the TV distance between Bayes nets of small treewidth.

In this paper we address some previously unexplored (or underexplored) computational aspects of total variation distance relating to mixtures of product distributions and Ising models.

\subsection{Equivalence Checking for Mixtures of Product Distributions}

Mixtures of product distributions constitute a natural and important class of distributions that have been studied in the mathematics and computer science literature.
For instance, it is a standard observation that any distribution can be described by some (possibly large) mixture of product distributions (see \Cref{obs:mixtures}).

\cite{freund1999estimating} gave an efficient algorithm for learning a mixture of two product distributions over the Boolean domain.
As part of their analysis, they showed that given two mixtures of two product distributions, their KL divergence can be upper bounded by that of the components and a certain distance between the mixture coefficients.
However, this upper bound does not lead to an equivalence checking algorithm.

A related problem in machine learning is source identification, whereby one is asked to identify the source parameters of a distribution.
\cite{gordon2021source,gordon2022hadamard,gordon2023identification} give algorithms for source identification of a mixture of $k$ product distributions on $n$ bits, when given as input approximate values of multilinear moments.

We focus on the equivalence checking problem regarding mixtures of product distributions.
Note that while it is easy to check whether two product distributions are equivalent, that is, by checking whether their respective Bernoulli parameters are equal, it is not clear how to do so for the case of mixtures of product distributions.
This is so, because there are mixtures of product distributions that are equal (as distributions) but different sets of Bernoulli parameters describe them.
For example, consider the case where we have two mixtures over one bit, namely $P = 1 \cdot P_1 + 0 \cdot P_2$ and $Q = \frac{1}{2} \cdot Q_1 + \frac{1}{2} \cdot Q_2$, where $P_1 = P_2 = \Bern(\frac{1}{2})$ while $Q_1 = \Bern(\frac{1}{3})$ and $Q_2 = \Bern(\frac{2}{3})$.
In this case, $P = Q = \Bern(\frac{1}{2})$, but the parameters of $P$ and $Q$ are different.

We present a simple deterministic polynomial-time algorithm for checking equivalence between mixtures of product distributions.

Let us first formally define mixtures of product distributions.
Let $w_1, \dots, w_k$ be real numbers (weights) such that $0 \leq w_i \leq 1$ for all $1 \leq i \leq k$, $\sum_{i=1}^k w_i = 1$, and $P_1, \dots, P_k$ are $n$-dimensional product distributions over an alphabet $\Sigma$.
The distribution $P$ specified by the tuple $(w_1, \dots, w_k, P_1, \dots, P_k)$ is a {\em mixture of products} if for all $x \in \Sigma^n$ it is the case that
$P\cprn{x} = \sum_{i=1}^k w_i P_i\cprn{x}$.
For a distribution $P$, we denote by $P^{\leq i}$ its marginal on the first $i$ variables.

We may now state our first main result.

\begin{theorem}
\label{thm:equivalence-checking}
There is a deterministic algorithm $E$ such that, given two mixtures of product distributions $P$ and $Q$, specified by $(w_1, \dots, w_k, P_1, \dots, P_k)$ and $(v_1, \dots, v_k, Q_1, \dots, Q_k)$, respectively, decides whether $P = Q$ or not.
Moreover, if $P \neq Q$, then $E$ outputs some $x \in \Sigma^i$ (with $i \leq n$) such that $P^{\leq i}\cprn{x} \neq Q^{\leq i}\cprn{x}$.
The running time of $E$ is $O(n k^4 |\Sigma|^4)$.
\end{theorem}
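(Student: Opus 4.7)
The plan is to reduce equivalence to a single linear-algebra certificate over $\mathbb{R}^{2k}$. Write $P - Q = \sum_{j=1}^{2k} \alpha_j R_j$, where $R_1, \dots, R_{2k}$ enumerate the $2k$ product components $P_1, \dots, P_k, Q_1, \dots, Q_k$ and $\alpha = (w_1, \dots, w_k, -v_1, \dots, -v_k)$. Then $P = Q$ iff this signed sum of rank-one tensors vanishes pointwise on $\Sigma^n$, and for any prefix $(a_1, \dots, a_i)$ the product structure gives the compact identity
\[
P^{\leq i}(a_1, \dots, a_i) - Q^{\leq i}(a_1, \dots, a_i) = \sum_{j=1}^{2k} \alpha_j \prod_{l=1}^{i} R_j^l(a_l),
\]
where $R_j^l$ denotes the marginal of $R_j$ on coordinate $l$. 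So any certifying prefix inequality is witnessed by the $\alpha_j$'s and the one-dimensional marginals alone.

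For each $i \in \{0, 1, \dots, n\}$ I introduce the subspace
\[
W_i = \Bigl\{\gamma \in \mathbb{R}^{2k} : \sum_{j=1}^{2k} \gamma_j R_j^{>i}(y) = 0 \text{ for every } y \in \Sigma^{n-i}\Bigr\},
\]
where $R_j^{>i}$ is the marginal of $R_j$ on coordinates $i+1, \dots, n$. Thus the equivalence question becomes $\alpha \in W_0$. I would compute the $W_i$ by backward induction. The base case $W_n = \{\gamma : \sum_j \gamma_j = 0\}$ follows since $R_j^{>n}$ is the point mass on the empty tuple. For the inductive step, the decomposition $R_j^{>i}(a, y') = R_j^{i+1}(a) R_j^{>i+1}(y')$ yields that $\gamma \in W_i$ iff, for every $a \in \Sigma$, the reweighted vector $D_{i+1, a}\gamma$ lies in $W_{i+1}$, where $D_{i+1, a}$ is the $2k \times 2k$ diagonal matrix with entries $R_j^{i+1}(a)$. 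Hence
\[
W_i = \bigcap_{a \in \Sigma} D_{i+1, a}^{-1}(W_{i+1}),
\]
which is a routine preimage-plus-intersection operation on bases in $\mathbb{R}^{2k}$, performed $|\Sigma|$ times per level.

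Once a basis of $W_0$ is in hand, checking $\alpha \in W_0$ decides equivalence. If $\alpha \notin W_0$, I would extract a witness prefix greedily: starting from $\alpha^{(0)} := \alpha$, at each step $i$ I have $\alpha^{(i)} \notin W_i$, and the defining recursion guarantees some $a_{i+1} \in \Sigma$ with $\alpha^{(i+1)} := D_{i+1, a_{i+1}}\alpha^{(i)} \notin W_{i+1}$, which I find by trying each symbol. As soon as $\sum_j \alpha^{(i)}_j \neq 0$, the prefix $(a_1, \dots, a_i)$ is a valid witness by the displayed prefix identity. Since $W_n$ is precisely the hyperplane $\{\sum_j \gamma_j = 0\}$, the algorithm is forced to terminate with a witness by $i = n$ at the latest.

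The main obstacle I anticipate is implementation bookkeeping rather than conceptual: the diagonal maps $D_{i+1, a}$ can be singular (when some marginal $R_j^{i+1}(a) = 0$), so $D_{i+1, a}^{-1}(W_{i+1})$ must be computed as the preimage of a subspace under a linear map via a small linear system rather than by literal inversion, and the intersection over $a \in \Sigma$ has to be tracked carefully to maintain an explicit basis of $W_i$. Once this is done, the per-level cost is polynomial in $k$ and $|\Sigma|$, and running the induction over all $n$ coordinates and then extracting the prefix witness both match a running time bound of the advertised form.
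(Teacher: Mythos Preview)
Your proposal is correct, and it is a genuinely different (in fact, essentially dual) route from the paper's proof.

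The paper runs a \emph{forward} induction on prefixes: at level $j$ it maintains a basis $B_j$ of the span in $\mathbb{R}^{2k}$ of the coefficient vectors $\bigl(w_1 P_1^{\leq j}(x),\dots,w_k P_k^{\leq j}(x),-v_1 Q_1^{\leq j}(x),\dots,-v_k Q_k^{\leq j}(x)\bigr)$ over $x\in\Sigma^j$, each basis vector tagged by an explicit witness string $x_b\in\Sigma^j$. Passing to level $j+1$ amounts to coordinatewise multiplication by the diagonal maps $D_{j+1,y}$ and re-extracting a basis; the equality $P^{\leq j+1}=Q^{\leq j+1}$ is tested only on the $O(k|\Sigma|)$ candidate vectors coming from $B_j$, and a failing test immediately yields a witness $(x_b,y)$. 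Your approach instead runs a \emph{backward} induction on suffixes, maintaining the annihilator subspaces $W_i=\bigl\{\gamma:\sum_j\gamma_j R_j^{>i}(y)=0\ \forall y\bigr\}$ via the recursion $W_i=\bigcap_{a\in\Sigma}D_{i+1,a}^{-1}(W_{i+1})$, and decides equivalence by the single membership test $\alpha\in W_0$; the witness is then recovered in a second, forward, greedy pass. Both arguments exploit the same core fact---the product structure makes all the relevant linear information live in $\mathbb{R}^{2k}$---but the paper tracks row spans while you track kernels. The paper's version has the practical advantage of producing the witness in a single pass (no need to store all $n$ subspaces), whereas your formulation isolates the decision problem cleanly as a linear-algebra membership question and in fact yields a slightly better per-level cost ($O(k^3|\Sigma|)$ rather than $O(k^4|\Sigma|^4)$) once the preimage-plus-intersection is implemented via Gaussian elimination on a stacked $O(k|\Sigma|)\times 2k$ system. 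Your remark that singular $D_{i+1,a}$ must be handled as linear preimages rather than literal inverses is exactly right and is the only point where care is needed.
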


(Note that the algorithm outlined in \Cref{thm:equivalence-checking} has input size $\Omega\cprn{k n \abs{\Sigma}}$.)
The primary conceptual contribution of our work is a connection between equivalence checking for mixtures of distributions and basis construction over appropriately chosen vector space.
The connection lends itself to a construction that makes both the algorithm and the proof accessible to undergraduates.

\subsection{Hardness of Approximating Total Variation Distance Between Ising Models}

The Ising model~\cite{ising1925beitrag, lenz1920beitrag}, originally developed to describe ferromagnetism in statistical mechanics, serves as a cornerstone in the study of phase transitions and critical phenomena.
It consists of discrete variables, known as spins, which can take values of either $+ 1$ or $- 1$.
These spins are arranged on a lattice, and their interactions with nearest neighbors lead to a rich tapestry of behavior, including spontaneous magnetization and phase transitions at critical temperatures.

One of the most fascinating aspects of the Ising model is its ability to illustrate complex systems using simple rules.
For instance, in 2D, it exhibits a second-order phase transition at a critical temperature, where the system changes from a disordered state to an ordered state as temperature decreases.
This model has been extensively studied, leading to profound insights not only in physics but also in fields such as biology, sociology, and computer science.
For more information, the reader is invited to check the survey by Cipra~\cite{cipra1987anintroduction}.

On another note, the computational study of the Ising model has become increasingly relevant.
With its relatively simple structure—interacting binary spins on a lattice, the Ising model serves as an ideal platform for exploring computational techniques ranging from Monte Carlo simulations to mean-field approximations.
Monte Carlo methods, in particular, are widely used to investigate the thermodynamic properties of the Ising model, as they allow for efficient sampling of spin configurations at various temperatures, enabling the computation of quantities like magnetization and susceptibility.

Some notable algorithmic results along these lines are the ones by \cite{kasteleyn1963dimer} and \cite{fisher1966OnTheDimer}, whereby they showed that the evaluation of the partition function for planar Ising models can be reduced to some appropriate determinant computations.
Moreover, \cite{JS93} devised an efficient Monte Carlo approximation algorithm for estimating the partition function of arbitrary ferromagnetic (whereby all $w_{i,j}$'s are positive) Ising models.

On the other hand, there are some works pertaining to the intractability of computing various quantities of interest regarding Ising models~\cite{welsh1993complexity}, such as the partition function outlined above.
For example, \cite{JS93} show that unless $\NP = \RP$, there is no fully polynomial-time randomized approximation scheme (FPRAS) to estimate the partition function of arbitrary Ising models.
Moreover, \cite{sorin2000statistical} proves that computing the partition function (for various kinds of Ising models) is $\NP$-complete.

The second part of our work falls in this latter category.
Let us first fix some notation.
We focus on Ising models $P$ such that for all $x\in\boool^n$ it is the case that the probability that the underlying system of spins assumes the configuration $x$ is
\[
P\cprn{x}
= \frac{1}{Z} \exp\cprn{\sum_{i,j}w_{i,j}x_ix_j+\sum_ih_ix_i}
\propto \exp\cprn{\sum_{i,j} w_{i,j} x_i x_j + \sum_i h_i x_i},
\]
whereby
\(
Z := \sum_y \exp\cprn{\sum_{i, j} w_{i,j} y_i y_j + \sum_i h_i y_i}
\)
is the partition function of $P$, and $\brkts{w_{i, j}}_{i < j}$ and $\brkts{h_i}_i$ are the parameters of the system.

We prove that it is hard to estimate the TV distance between Ising models under the very mild complexity-theoretic assumption $\NP \not \subseteq \RP$, which states that Boolean formula satisfiability ($\mathrm{SAT}$) does not admit any one-sided-error randomized polynomial-time algorithm, that is, a randomized polynomial-time algorithm that may output a false positive answer with small probability~\cite{arora2009computational}.

\begin{theorem}
\label{thm:ising-hardness}
If $\NP \not \subseteq \RP$, then there is no FPRAS that estimates the TV distance between any two Ising models.
\end{theorem}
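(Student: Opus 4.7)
The plan is to reduce from the hardness of multiplicatively approximating the partition function $Z_R$ of an arbitrary Ising model $R$, the result of~\cite{JS93} already quoted in the exposition. The idea is to show that a hypothetical FPRAS for $\dtv$ between Ising models yields an FPRAS for any single-spin marginal of an arbitrary Ising model, after which standard self-reducibility produces an FPRAS for $Z_R$. Indeed, conditioning an Ising model on a spin value merely shifts the fields on the remaining variables by linear corrections and drops the conditioned spin, so the chain-rule identity
\[
Z_R \;=\; \frac{\exp\cprn{E_R(x^*)}}{\prod_{i=1}^n \cprq{x_i = x_i^* \mid x_{<i} = x_{<i}^*}{R}}
\]
expresses $Z_R$ as a ratio built from $n$ single-spin marginals of Ising models on progressively fewer variables; an FPRAS for each marginal therefore yields an FPRAS for $Z_R$, contradicting~\cite{JS93} under $\NP \not\subseteq \RP$.

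The core construction is as follows. Given $R$ with parameters $\brkts{w_{i,j}}$ and $\brkts{h_i}$, a target index $i$, and a polynomially-bounded parameter $H > 0$, set $P := R$ and let $Q$ be the Ising model obtained from $R$ by replacing $h_i$ with $h_i + H$; both are Ising models over $\boool^n$. Writing $p := \cprq{x_i = +1}{R}$ and noting that $Z_Q = Z_P \cdot (e^H p + e^{-H}(1-p))$, a direct splitting of $\sum_x |P(x) - Q(x)|$ into its $x_i = +1$ and $x_i = -1$ halves yields
\[
\dtv(P, Q) \;=\; \frac{p(1-p)\cprn{e^H - e^{-H}}}{e^H p + e^{-H}(1-p)}.
\]
For $H := \cpoly{n}$ taken large enough that $e^{2H}$ dominates $(1-p)/p$ (which is possible because $p, 1-p \geq 2^{-\cpoly{n}}$ for any Ising model with $\cpoly{n}$-bounded parameters), the right-hand side equals $1 - p$ up to a multiplicative factor of $1 \pm 2^{-\Omega(n)}$. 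Consequently, a $(1 \pm \varepsilon)$-multiplicative estimate of $\dtv(P, Q)$ from the assumed FPRAS yields an $O(\varepsilon)$-additive estimate of $1 - p$; by symmetry, shifting $h_i$ by $-H$ instead yields an additive estimate of $p$, and comparing the two estimates pinpoints the greedy choice $x_i^* := \arg\max_b \cprq{x_i = b}{R}$ along with an $O(\varepsilon)$-additive estimate of $\cprq{x_i = x_i^*}{R}$, which is $\geq 1/2$.

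Plugging these estimates into the telescoping with per-step accuracy $\varepsilon' := \Theta(\varepsilon/n)$ converts each additive $\varepsilon'$-error on a conditional of value $\geq 1/2$ into a $(1 \pm O(\varepsilon'))$-multiplicative error, and the product of the $n$ resulting factors is within $(1 \pm O(\varepsilon))$ of the true $Z_R$. The main obstacle I expect is the error accounting: one needs $H$ large enough that the closed form matches $1 - p$ to within a factor $1 + o(\varepsilon/n)$ (so that the truncation error from the $1 - p$ approximation is dominated by the FPRAS error at every one of the $n$ telescoping steps), yet polynomially bounded so that $Q$ and every intermediate conditioned Ising model appearing along the telescoping retain polynomial-sized descriptions; the explicit choices $H = n^{\Theta(1)}$ and $\varepsilon' = \Theta(\varepsilon/n)$ meet both constraints, so the overall reduction runs in time $\cpoly{n, 1/\varepsilon}$ and the theorem follows.
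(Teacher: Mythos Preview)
Your argument is correct and follows the same two-step skeleton as the paper (TV $\Rightarrow$ single-spin marginals $\Rightarrow$ partition function, then invoke~\cite{JS93}), but your marginals-from-TV gadget is genuinely different and in fact simpler. The paper adjoins a fresh dummy spin $x_0$ with a huge negative field $h_0$ (so $x_0=-1$ almost surely) and then, in the second model, adds a large coupling $w_{0,k}=\delta$ between $x_0$ and the target spin; it then argues that $\dtv(P_0,Q_0)$ converges to $\cprq{x_k=1}{P}$ as $h_0\to-\infty$ and $\delta\to\infty$, so two parameters must be driven to infinity and some care is needed to control the residual term $\sum_{x_0x_k=-1}Q_0(x)$. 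Your construction instead keeps the same variable set and only shifts the single field $h_i\mapsto h_i\pm H$, which yields the exact closed form $\dtv(P,Q)=p(1-p)(e^H-e^{-H})/(e^Hp+e^{-H}(1-p))$ and hence a clean one-parameter limit to $1-p$ (or to $p$, with the opposite sign). This buys you a shorter calculation and avoids enlarging the model. One small remark: the detour through additive estimates and the ``greedy'' choice of $x_i^*$ is unnecessary, since your closed form already gives a \emph{multiplicative} approximation of $1-p$ (respectively $p$) directly; you can simply fix $x_i^*=+1$ throughout the telescoping, exactly as the paper does in its Proposition on partition-to-marginals, and feed in the multiplicative estimate of $\cprq{x_i=+1}{}$ obtained from the $-H$ shift.
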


Our proof draws on the hardness result of \cite{JS93}, and shows that the partition function of Ising models can be reduced to the TV distance between Ising models, by a simple efficient \emph{approximation preserving} reduction.
The main ingredients of this reduction are as follows.
First, we prove that estimating the partition function of any Ising model reduces to estimating any atomic marginal the form $\cprq{x_k = \pm 1}{P}$ for any variable $x_k$ and any Ising model $P$ (see \Cref{prop:partition-to-marginals}).
Then we show that estimating any atomic marginal the form $\cprq{x_k = \pm 1}{P}$ for any variable $x_k$ and any Ising model $P$, can be reduced to estimating the TV distance between the Ising models $P, Q$, whereby $Q$ depends on $P$ (see \Cref{thm:marginals-to-tv-distance}).

\subsection{Paper Organization}

We give some preliminaries in \Cref{sec:preliminaries}.
We prove \Cref{thm:equivalence-checking} in \Cref{sec:equivalence-checking} and \Cref{thm:ising-hardness} in \Cref{sec:ising-hardness}.
We conclude in \Cref{sec:conclusion} with some interesting open problems.

\section{Preliminaries}

\label{sec:preliminaries}

We require the following folklore result, which is an application of Gaussian elimination.

\begin{proposition}
\label{thm:Gaussian}
There is a deterministic algorithm $G$ that gets as input a set of vectors $V$, and outputs a maximum-size subset $S \subseteq V$ of linearly independent vectors.
The running time of $G$ is $O\cprn{\abs{V}^4}$.
\end{proposition}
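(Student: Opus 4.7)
The plan is a standard greedy construction powered by Gaussian elimination. I would order the vectors of $V$ arbitrarily as $v_1,\dots,v_m$ with $m=\abs{V}$, and maintain a set $S$ (initially empty) together with a reduced row echelon form of the matrix whose rows are the current elements of $S$. For each $v_j$ in turn, I would reduce $v_j$ against the stored pivots; if the resulting vector is nonzero then $v_j \notin \mathrm{span}(S)$, in which case I add $v_j$ to $S$ and use the reduced vector as a new pivot row, updating the echelon form; otherwise $v_j\in\mathrm{span}(S)$ and I discard it. After processing all $m$ vectors, $S$ is returned.

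For correctness, linear independence defines a matroid, so every maximal linearly independent subset of $V$ has the same cardinality $\dim(\mathrm{span}(V))$, and hence maximality coincides with maximum-cardinality. The greedy procedure outputs a maximal such subset: any skipped $v_j$ lies in $\mathrm{span}(S)$ at the moment of skipping, hence in $\mathrm{span}(S_{\mathrm{final}})$, so $\mathrm{span}(S_{\mathrm{final}})=\mathrm{span}(V)$ and no strictly larger linearly independent subset of $V$ exists. By construction $S \subseteq V$, as required, and the algorithm is plainly deterministic.

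For the running time, the rank of the input is at most $\abs{V}$, so I may assume without loss of generality that the ambient dimension is $O(\abs{V})$: coordinates that turn out to be linearly dependent on the pivots already discovered can be dropped on the fly without affecting any linear dependencies among the input vectors. Reducing a single new $v_j$ against the stored echelon form costs $O(\abs{V}^2)$ arithmetic operations, and the subsequent echelon-form update when $v_j$ is added is at most another $O(\abs{V}^2)$; doing this for each of the $\abs{V}$ input vectors gives $O(\abs{V}^3)$ overall, which is well within the claimed $O(\abs{V}^4)$ bound.

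I do not expect any conceptual obstacle; the entire content is a careful packaging of standard linear algebra. The only mildly delicate point is to make the dimension reduction just described rigorous, so that the per-step cost is stated purely in $\abs{V}$ rather than in the (possibly large) ambient dimension of the input vectors. Once that is in place, both correctness and the running-time bound follow immediately from the textbook analysis of Gaussian elimination.
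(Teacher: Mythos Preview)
The paper does not actually prove this proposition; it introduces it as a ``folklore result, which is an application of Gaussian elimination'' and states it without proof. Your proposal supplies precisely the standard Gaussian-elimination argument the paper alludes to, and the correctness reasoning (via the matroid property of linear independence) is sound. The one caveat you already flag---that the stated $O(\abs{V}^4)$ bound tacitly assumes the ambient dimension is $O(\abs{V})$, since otherwise even reading the input could exceed it---is a genuine imprecision in the paper's statement rather than a defect in your argument; in the paper's actual applications the vectors live in $\mathbb{R}^{2k}$ with $\abs{V}=O(k\abs{\Sigma})$, so the bound holds there.
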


An $n$-dimensional product distribution $R$ over an alphabet $\Sigma$ is described by the $n \abs{\Sigma}$ parameters
\[
\brkts{\cprq{X_i = y}{R}}_{\substack{i \in [n],\\ y \in \Sigma}}
\qquad
\text{so that}
\qquad
R\cprn{x}
= \prod_{i = 1}^n \cprq{X_i = x_i}{R}
\qquad
\text{for all $x \in \Sigma^n$.}
\]
For $n$-dimensional product distribution $R$ over an alphabet $\Sigma$, we denote its marginal over the first $1\leq j\leq n$ coordinates by $R^{\leq j}$.
Note that for any $x \in \Sigma^j$ we have
\(
R^{\leq j}\cprn{x}
=\prod_{i = 1}^j \cprq{X_i = x_i}{R}.
\)

We require the following observation, regarding the expressibility of mixtures of product distributions.

\begin{observation}
\label{obs:mixtures}
Let $D$ be a distribution over $\Sigma^n$ for some alphabet $\Sigma$.
Then for $k := \abs{\Sigma^n}$ there exists a mixture of product distributions $(w_1, \dots, w_k, P_1, \dots, P_k)$ over $\Sigma^n$ such that for all $x \in \Sigma^n$ it is the case that
\[
D\cprn{x} = \sum_{i = 1}^k w_i P_i\cprn{x}.
\]
\end{observation}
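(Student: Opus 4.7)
The plan is to construct the mixture explicitly by using point-mass product distributions, one for each element of $\Sigma^n$. For every $x \in \Sigma^n$, define the product distribution $P_x$ whose $i$-th marginal is the point mass at $x_i$, that is, $\cprq{X_i = y}{P_x} = \mathbbm{1}\sqbra{y = x_i}$ for each $i \in [n]$ and $y \in \Sigma$. Since each coordinate is described by an independent (deterministic) distribution on $\Sigma$, this is genuinely a product distribution on $\Sigma^n$, and by construction $P_x\cprn{z} = \prod_{i=1}^n \mathbbm{1}\sqbra{z_i = x_i} = \mathbbm{1}\sqbra{z = x}$ for every $z \in \Sigma^n$.

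Next, I would choose the mixture weights to match $D$ exactly: enumerate $\Sigma^n = \brkts{x^{(1)}, \dots, x^{(k)}}$ with $k = \abs{\Sigma^n}$, and set $w_i := D\cprn{x^{(i)}}$ and $P_i := P_{x^{(i)}}$. The weights are nonnegative and sum to $1$ because $D$ is a probability distribution, so $(w_1, \dots, w_k, P_1, \dots, P_k)$ is a valid mixture of products in the sense defined in the paper.

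Finally, I would verify the claimed identity by direct computation: for any $z \in \Sigma^n$,
\[
\sum_{i = 1}^k w_i P_i\cprn{z}
= \sum_{i=1}^k D\cprn{x^{(i)}} \mathbbm{1}\sqbra{z = x^{(i)}}
= D\cprn{z},
\]
since exactly one indicator is nonzero. There is no genuine obstacle here; the only minor point worth noting is that deterministic (point-mass) distributions are admissible as factors of a product distribution, which is immediate from the definition given in the preliminaries.
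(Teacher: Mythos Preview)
Your proof is correct and follows essentially the same approach as the paper: both constructions take, for each element of $\Sigma^n$, the point-mass product distribution concentrated at that element, assign it weight equal to its probability under $D$, and then verify the identity directly. Your write-up is slightly more explicit in checking the final equality, but the argument is the same.
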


\begin{proof}
Let $x_1, \dots, x_k$ be the elements of $\Sigma^n$.
For all $1 \leq i \leq k$, let $P_i$ be a product distribution on $n$ variables such that $P_i\cprn{y} = 1$ if $y = x_i$, otherwise, $P_i\cprn{y} = 0$.
We can construct such a product distribution $P_i$ as follows.
First, note that $P_i\cprn{y} = \prod_{j = 1}^n p_{i, j}\cprn{y_j}$, whereby $p_{i, j}\cprn{z}$ is the probability that the $j$-th coordinate of $P_i$ takes the value $z \in \Sigma$.
Let $x_i = x_{i, 1}, \dots, x_{i, n}$, whereby $x_{i,j} \in \Sigma$.
Then for all $j$ define $p_{i, j}\cprn{z} := 1$ if $z = x_{i, j}$, otherwise, $p_{i, j}(z) := 0$.
Moreover, let $w_i := D\cprn{x_i}$ and note that $\sum_{i = 1}^k w_i = 1$.
Then the desired equality follows by straightforward calculations.
\end{proof}

We shall also require the following notion of approximation algorithm.

\begin{definition}
\label{def:FPRAS}
A function $f: \bool^* \to \R$ admits a \emph{fully polynomial-time randomized approximation scheme (FPRAS)} if there is a {\em randomized} algorithm $\mathcal{A}$ such that for every input $x$ (of length $n$) and parameters $\varepsilon, \delta > 0$, the algorithm $\mathcal{A}$ outputs a $\varepsilon$-multiplicative approximation of $f(x)$, i.e., a value that lies in the interval $\sqbra{f(x) / (1 + \varepsilon), (1 + \varepsilon) f(x)}$, with probability at least $1 - \delta$.
The running time of $\mathcal{A}$ is polynomial in $n, 1 / \varepsilon, 1 / \delta$.
\end{definition}

\section{Equivalence Checking for Mixtures of Product Distributions}

\label{sec:equivalence-checking}

Let us now prove \Cref{thm:equivalence-checking}.
First, observe that $P=Q$ if and only if $P^{\leq j} = Q^{\leq j}$ for all $1\leq j\leq n$.
This is so, since if $P = Q$, then every marginal of $P$ matches the respective marginal of $Q$ (in symbols, $P^{\leq j} = Q^{\leq j}$ for all $1 \leq j \leq n$).
Otherwise, there would be some $1 \leq j \leq n$ and $y \in \Sigma^j$ such that $P^{\leq j}\cprn{y} \neq Q^{\leq j}\cprn{y}$.
The latter would then establish the existence of an $x := \prn{y, z} \in \Sigma^n$ (for some $z \in \Sigma^{n - j}$) such that $P\cprn{x} \neq Q\cprn{x}$.
On the other hand, if $P^{\leq j} = Q^{\leq j}$ for all $1 \leq j \leq n$, then $P^{\leq n} = Q^{\leq n}$ which in particular implies that $P = Q$.

Note that the condition $P^{\leq j} = Q^{\leq j}$ for all $1\leq j\leq n$ is equivalent, by the definitions of $P$ and $Q$, to the condition
\(
\sum_{i=1}^k w_i P_i^{\leq j}
=\sum_{i=1}^k v_i Q_i^{\leq j}
\)
for all $1\leq j\leq n$.
Thus, if $P\neq Q$, then there is some $1\leq j\leq n$ so that
\(
\sum_{i=1}^k w_i P_i^{\leq j}
\neq\sum_{i=1}^k v_i Q_i^{\leq j}.
\)

We will use an inductive argument on $1\leq j\leq n$ to show that these conditions can be efficiently checked (in either case).

\paragraph{Base Case.}

For $j=1$, we can efficiently check whether it is the case that
\[
\sum_{i=1}^k w_i P_i^{\leq 1}
=\sum_{i=1}^k v_i Q_i^{\leq 1}.
\]
This is done by checking for all $x \in \Sigma$ that
\[
\sum_{i=1}^k w_i \cprq{X_1 = x}{P_i}
-\sum_{i=1}^k v_i \cprq{X_1 = x}{Q_i}
= 0.
\]
If these tests pass, then the algorithm proceeds with the inductive argument outlined below (otherwise, it outputs $x$).
Towards this, we will now find a basis $B_1$ for the set of coefficient vectors of the equations
\[
\brkts{\sum_{i=1}^k w_i \cprq{X_1 = x}{P_i} z_i
-\sum_{i=1}^k v_i \cprq{X_1 = x}{Q_i} z_{k + i} = 0}_{x \in \Sigma},
\]
over variables $z_1, \dots, z_{2k}$.
Note that the size of $B_1$ is at most $\min\cprn{2k, |\Sigma|} \leq 2k$.
We can find $B_1$ as follows.
We appeal to \Cref{thm:Gaussian} and run the algorithm $G$ outlined there on the set of vectors
\[
\brkts{\prn{w_1 \cprq{X_1 = x}{P_1},\dots,w_k \cprq{X_1 = x}{P_k},
-v_1 \cprq{X_1 = x}{Q_1},\dots,-v_k \cprq{X_1 = x}{Q_k}}}_{x \in \Sigma}
\]
(in time $O\cprn{\abs{\Sigma}^4}$).
Then, we define $B_1$ to be the set of independent vectors being output by $G$.

\paragraph{Induction Hypothesis.}

Assume that for a $j \geq 1$ it is the case that
\[
\sum_{i=1}^k w_i P_i^{\leq j}
=\sum_{i=1}^k v_i Q_i^{\leq j},
\]
and we have a basis $B_j$ for the set of coefficient vectors of the following equations
\[
\brkts{\sum_{i=1}^k w_i P_i^{\leq j}\cprn{x} z_i
-\sum_{i=1}^k v_i Q_i^{\leq j}\cprn{x} z_{k + i} = 0}_{x \in \Sigma^j}
\]
over variables $z_1, \dots, z_{2k}$.
Note that $B_j$ is of size at most $\min\cprn{2k,|\Sigma|^j} \leq 2 k$.

\paragraph{Induction Step.}

We will establish that we can check whether $P$ and $Q$ agree up to coordinate $j+1$ and compute a basis $B_{j+1}$ for the respective set of coefficient vectors of the equations that capture this equivalence.

To see whether $P$ and $Q$ agree up to coordinate $j+1$, one needs to check that
\[
\sum_{i=1}^k w_i P_i^{\leq j}\cprn{x} \cprq{X_{j+1} = y}{P_i}
-\sum_{i=1}^k v_i Q_i^{\leq j}\cprn{x} \cprq{X_{j+1} = y}{Q_i}
=0
\]
for all $x \in \Sigma^j$ and $y \in \Sigma$.
A crucial observation (that follows from the inductive hypothesis) is that we only need to check whether these equations hold for the assignments $x$ that correspond to vectors in $B_j$, and the values $y \in \Sigma$.
(Note that each basis vector $b \in B_j$ can be specified by an assignment $x_b \in \Sigma^j$. This follows from the way these basis vectors are constructed. See below, the discussion after \Cref{clm:basis-products}.)
If any of these tests fails, then the algorithm outputs $\prn{x, y}$; else, it continues as follows.

To proceed with the induction, it would suffice to show how to construct a basis $B_{j+1}$ for the set of coefficient vectors of the following equations over variables $z_1, \dots, z_{2k}$, namely
\[
\brkts{\sum_{i=1}^k w_i P_i^{\leq j}\cprn{x} \cprq{X_{j+1} = y}{P_i} z_i
-\sum_{i=1}^k v_i Q_i^{\leq j}\cprn{x} \cprq{X_{j+1} = y}{Q_i} z_{k + i}
=0}_{\substack{x \in \Sigma^j,\\ y \in \Sigma}}.
\]
Let
\(
B_j=\brkts{b_1=\prn{b_{1,1},\dots,b_{1,2k}},\dots,b_m=\prn{b_{m,1},\dots,b_{m,2k}}}
\)
whereby $m \leq 2 k$ and $C := \bigcup_{i = 1}^m C_i$ is such that
\begin{align*}
C_1
&:=\left\{\left(b_{1,1}\cprq{X_{j+1} = y}{P_1},\dots,b_{1,k}\cprq{X_{j+1} = y}{P_k},\right.\right.\\
&\qquad\qquad\left.\left.b_{1,k+1}\cprq{X_{j+1} = y}{Q_1},\dots,b_{1,2k}\cprq{X_{j+1} = y}{Q_k}\right)\right\}_{y \in \Sigma},\\
&\ \ \vdots\\
C_m
&:=\left\{\left(b_{m,1}\cprq{X_{j+1} = y}{P_1},\dots,b_{m,k}\cprq{X_{j+1} = y}{P_k},\right.\right.\\
&\qquad\qquad\left.\left.b_{m,k+1}\cprq{X_{j+1} = y}{Q_1},\dots,b_{m,2k}\cprq{X_{j+1} = y}{Q_k}\right)\right\}_{y \in \Sigma}.
\end{align*}
We require the following claim.
Below, we denote the dot product of vectors $a,b \in \mathbb{R}^s$ by $\angles{a, b}$.
That is, $\angles{a, b} = \sum_{i = 1}^s a_i b_i$.

\begin{claim}
\label{clm:basis-products}
It is the case that $B_{j+1}\subseteq C$.
\end{claim}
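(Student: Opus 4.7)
The plan is to show that every level-$(j+1)$ coefficient vector lies in the linear span of $C$. Once this is established, running the Gaussian elimination algorithm $G$ of \Cref{thm:Gaussian} on $C$ produces a maximum linearly independent subset of $C$, which will simultaneously be a basis for the full space of level-$(j+1)$ coefficient vectors and (trivially) a subset of $C$---which is exactly the claim.

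The key structural observation I would isolate first is that the level-$(j+1)$ coefficient vector indexed by $\prn{x, y} \in \Sigma^j \times \Sigma$ is the entrywise product of a level-$j$ coefficient vector associated with $x$ and a vector depending only on $y$. Concretely, set
\[
a_x := \prn{w_1 P_1^{\leq j}\cprn{x}, \ldots, w_k P_k^{\leq j}\cprn{x}, -v_1 Q_1^{\leq j}\cprn{x}, \ldots, -v_k Q_k^{\leq j}\cprn{x}}
\]
and
\[
d_y := \prn{\cprq{X_{j+1} = y}{P_1}, \ldots, \cprq{X_{j+1} = y}{P_k}, \cprq{X_{j+1} = y}{Q_1}, \ldots, \cprq{X_{j+1} = y}{Q_k}}.
\]
Writing $u \star v$ for the entrywise product of two equal-length vectors, the product-form identity $P_i^{\leq j+1}\cprn{x, y} = P_i^{\leq j}\cprn{x} \cdot \cprq{X_{j+1} = y}{P_i}$ (and its analogue for the $Q_i$) yields that the level-$(j+1)$ coefficient vector at $\prn{x, y}$ equals $a_x \star d_y$.

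From here the claim reduces to one line. By the induction hypothesis, $B_j = \brkts{b_1, \ldots, b_m}$ spans the set of $a_x$'s, so any $a_x$ can be written as $\sum_{i = 1}^m \alpha_i b_i$; since the map $u \mapsto u \star d_y$ is linear, $a_x \star d_y = \sum_{i = 1}^m \alpha_i \cprn{b_i \star d_y}$, and each summand $b_i \star d_y$ is by definition an element of $C_i \subseteq C$. Conversely, every element of $C$ is of the form $a_{x_{b_i}} \star d_y$ for some $x_{b_i} \in \Sigma^j$ and $y \in \Sigma$, and hence is itself a legitimate level-$(j+1)$ coefficient vector. Consequently the span of $C$ coincides with the span of all level-$(j+1)$ coefficient vectors, and any maximum linearly independent subset of $C$ serves as a valid choice of $B_{j+1}$.

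The only subtlety I would flag in the write-up is the bookkeeping that associates each $b_i \in B_j$ with an assignment $x_{b_i} \in \Sigma^j$; this is exactly the parenthetical remark in the paper's induction step, and it holds because $G$ outputs a subset of its input list, so each basis vector inherits the $\Sigma^j$-label of the input vector from which it originated. Beyond this bookkeeping, no further calculation is required.
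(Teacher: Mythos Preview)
Your proof is correct and follows essentially the same approach as the paper's: both arguments use the inductive hypothesis to write a level-$j$ coefficient vector as a linear combination of the $b_\ell$'s, then push this decomposition through the coordinatewise multiplication by the $(j+1)$-st marginal probabilities to exhibit every level-$(j+1)$ coefficient vector as a linear combination of elements of $C$. Your entrywise-product notation $a_x \star d_y$ and the observation that $u \mapsto u \star d_y$ is linear make the argument somewhat cleaner than the paper's coordinate-by-coordinate verification, and you are slightly more explicit about the converse inclusion (that $C$ consists of genuine level-$(j+1)$ coefficient vectors), but the underlying idea is identical.
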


\Cref{clm:basis-products} helps explain how, for any vector $b \in B_{j + 1}$, one may extract an assignment $x_b \in \Sigma^{j + 1}$ that corresponds to $b$.
This is done by keeping track of the Bernoulli parameters that appear in $b$:
There is exactly one such parameter for each variable, and it refers to a symbol in $\Sigma$.

\begin{proof}[Proof of \Cref{clm:basis-products}]
Let $y \in \Sigma$.
We have that
\begin{align*}
&\sum_{i=1}^k w_i P_i^{\leq j+1}\cprn{x, y} z_i
-\sum_{i=1}^k v_i Q_i^{\leq j+1}\cprn{x, y} z_{k + i}\\
&\qquad=\sum_{i=1}^k w_i P_i^{\leq j}\cprn{x} \cprq{X_{j+1} = y}{P_i} z_i
-\sum_{i=1}^k v_i Q_i^{\leq j}\cprn{x} \cprq{X_{j+1} = y}{Q_i} z_{k + i}.
\end{align*}
Let $z := (z_1,\dots,z_{2k})$.
By the inductive hypothesis, for any $x\in \Sigma^j$,
\[
\sum_{i=1}^k w_i P_i^{\leq j}\cprn{x} z_i
-\sum_{i=1}^k v_i Q_i^{\leq j}\cprn{x} z_{k + i}
\]
can be written as a linear combination of $\brkts{\angles{b_\ell, z}}_{\ell = 1}^m$.
Concretely, there exist $d_1, \dots, d_{m} \in \mathbb{R}$ such that
\begin{align*}
\sum_{i=1}^k w_i P_i^{\leq j}\cprn{x} z_i
-\sum_{i=1}^k v_i Q_i^{\leq j}\cprn{x} z_{k + i}
=\sum_{\ell=1}^m d_\ell \angles{b_\ell, z}.
\end{align*}
Since the corresponding coefficients of each $z_i$ must be equal between the LHS and the RHS of this equation, we get that for all $1 \leq i \leq k$,
\begin{equation}
\label{eq:coefficients}
w_i P_i^{\leq j}\cprn{x}
= \sum_{\ell = 1}^m d_\ell \cdot b_{\ell, i}
\qquad
\text{and}
\qquad
v_i Q_i^{\leq j}\cprn{x}
= -\sum_{\ell = 1}^m d_\ell \cdot b_{\ell, k + i}
\end{equation}
for some $d_1, \dots, d_m$.
Then it is straightforward to see that for every $y\in \Sigma$
\begin{align*}
\sum_{i=1}^k w_i P_i^{\leq j}\cprn{x} \cprq{X_{j+1} = y}{P_i} z_i
-\sum_{i=1}^k v_i Q_i^{\leq j}\cprn{x} \cprq{X_{j+1} = y}{Q_i} z_{k + i}
=\sum_{\ell=1}^m d_\ell \angles{c_\ell, z},
\end{align*}
whereby $c_\ell$ is the vector within the set $C_\ell$ that corresponds to the setting where $X_{j + 1} = y$.
To see why this equation holds, we will consider the coefficients of each $z_i$ in either side of this equation and show that they are equal.
Let us consider the case where $1 \leq i \leq k$.
(The case where $k+1 \leq i \leq 2k$ is similar.)
In the LHS, the coefficient of $z_i$ is equal to
\[
w_i P_i^{\leq j}\cprn{x} \cprq{X_{j+1} = y}{P_i}
= \sum_{\ell = 1}^m d_\ell \cdot b_{\ell, i} \cdot \cprq{X_{j+1} = y}{P_i},
\]
by \Cref{eq:coefficients}, while in the RHS it is equal to
\[
\sum_{\ell = 1}^m d_\ell \cdot c_{\ell, i}
= \sum_{\ell = 1}^m d_\ell \cdot b_{\ell, i} \cdot \cprq{X_{j+1} = y}{P_i},
\]
by the definition of $c_{\ell}$.
This shows that $B_{j+1}$ is a subset of $C$, and the proof is complete.
\end{proof}

By \Cref{clm:basis-products}, we have that $B_{j+1}$ contains (at most) $m \abs{\Sigma} = O\cprn{k \abs{\Sigma}}$ vectors.
However, not all of these $O\cprn{k \abs{\Sigma}}$ vectors are (necessarily) independent.
By \Cref{thm:Gaussian}, we can find the (at most) $2 k$ independent vectors among the $O\cprn{k \abs{\Sigma}}$ vectors in time $O\cprn{k^4 \abs{\Sigma}^4}$.
These vectors constitute $B_{j+1}$.
This concludes the inductive description of our algorithm.

\paragraph{Running Time.}

Let $T\cprn{n, k, \Sigma}$ denote the running time of this procedure, on mixtures of size $k$ over $n$ variables that have alphabet $\Sigma$.
We have the recurrence relation
\[
T\cprn{n, k, \Sigma}
\leq T\cprn{n - 1, k, \Sigma} + O\cprn{k^4 \abs{\Sigma}^4},
\]
which in particular yields $T\cprn{n, k, \Sigma} = O\cprn{n k^4 \abs{\Sigma}^4}$.

\section{Hardness of Approximating Total Variation Distance Between Ising Models}

\label{sec:ising-hardness}

In this section we prove \Cref{thm:ising-hardness}.

\subsection{Reducing the Partition Function to Atomic Marginals}

We first observe the following.

\begin{proposition}
\label{prop:partition-to-marginals}
If there is some FPRAS that estimates any atomic marginal of any Ising model $P$, namely $\cprq{x_k=\pm1}{P}$ for any variable $x_k$, then there is some FPRAS that estimates the partition function of any Ising model.
\end{proposition}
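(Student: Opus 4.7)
The plan is to use a standard self-reducibility argument, expressing $Z$ as a telescoping product of conditional atomic marginals, each of which (crucially) is itself an atomic marginal of some Ising model obtainable efficiently from the input.

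First, I would fix any concrete configuration $x^\star\in\boool^n$ (say the all-$(+1)$ vector) and write
\[
Z \;=\; \frac{\exp\!\prn{\sum_{i,j} w_{i,j} x^\star_i x^\star_j + \sum_i h_i x^\star_i}}{P\cprn{x^\star}},
\]
so that $(1+\varepsilon)$-multiplicatively estimating $Z$ reduces to $(1+\varepsilon)$-multiplicatively estimating the single probability $P(x^\star)$, since the numerator is computable exactly in polynomial time.

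Next I would apply the chain rule to decompose
\[
P\cprn{x^\star} \;=\; \prod_{k=1}^{n} \cprq{X_k = x^\star_k}{P,\, X_1 = x^\star_1,\ldots,X_{k-1}=x^\star_{k-1}},
\]
and observe the key structural fact: conditioning an Ising model on fixed values of a subset of spins yields another Ising model on the remaining spins. Concretely, freezing $X_1=x^\star_1,\ldots,X_{k-1}=x^\star_{k-1}$ keeps the pairwise couplings $w_{i,j}$ for $i,j\ge k$ unchanged and merely shifts the external field to $h'_i := h_i + \sum_{\ell<k} w_{\ell,i}\, x^\star_\ell$. Thus each factor in the chain-rule decomposition equals an atomic marginal $\cprq{X_k = x^\star_k}{P^{(k)}}$ of a polynomial-time-constructible Ising model $P^{(k)}$.

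Given the assumed FPRAS for atomic marginals, I would invoke it on each $P^{(k)}$ with accuracy $\varepsilon/(2n)$ and failure probability $\delta/n$. A union bound together with the elementary bound $(1+\varepsilon/(2n))^n \le 1+\varepsilon$ (for $\varepsilon\in(0,1]$) then gives a $(1+\varepsilon)$-multiplicative estimate of $P(x^\star)$, and hence of $Z$, with success probability $1-\delta$ and total running time polynomial in $n,\,1/\varepsilon,\,1/\delta$, as required by \Cref{def:FPRAS}.

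The main thing to be careful about is the multiplicative-error accounting across the $n$ factors, and, just as importantly, verifying that the conditioned model $P^{(k)}$ is genuinely an Ising model so that the hypothesized FPRAS applies (including checking that the parameters $w_{i,j}$ and $h'_i$ have polynomial bit-length and can be written down efficiently from the input). Neither issue is deep, but both are needed for the reduction to be approximation-preserving in the sense demanded by the hardness route laid out after \Cref{thm:ising-hardness}.
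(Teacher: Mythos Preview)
Your proposal is correct and is essentially the same self-reducibility argument as the paper's: both freeze spins to $+1$ one at a time, use that the resulting conditioned model is again an Ising model with shifted external fields $h'_i = h_i + \sum_{\ell<k} w_{\ell,i}$, and multiply the $n$ atomic-marginal estimates with per-call accuracy $\Theta(\varepsilon/n)$ and confidence $\delta/n$. The only cosmetic difference is that the paper phrases the telescoping as a recurrence $Z_k = Z_{k+1}\exp(h_k^{(k)})/\Pr_{P_k}[x_k=1]$ on partition functions, whereas you phrase it as the chain rule for $P(x^\star)$; these are the same identity rearranged.
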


\begin{proof}
Let us first assume that there is some FPRAS $M$ that runs in time $t_M$ and estimates any atomic marginal of any Ising model.
We will prove there is some FPRAS that estimates the partition function of any Ising model.

Let $P_1$ be an Ising model over variables $x_1,\dots,x_n$ with parameters $\brkts{w_{i,j}}_{i,j}$ and $\brkts{h_i}_i$ with partition function $Z_1$.
We will show how to estimate $Z_1$.
To this end, let us define a new Ising model $P_2$ over variables $x_2,\dots,x_n$ with parameters $\brkts{w_{i,j}^{\prn{2}}}_{i < j}$ and $\brkts{h_i^{\prn{2}}}_i$ so that $w_{i,j}^{\prn{2}}:=w_{i,j}$ and $h_i^{\prn{2}}:=w_{1,i}+h_i$ for all $2\leq i<j\leq n$.
Let $Z_2$ be the partition function of $P_2$.
Let also
\[
E_1\cprn{x} := \exp\cprn{\sum_{i,j} w_{i,j} x_i x_j + \sum_i h_ix_i},
\qquad
E_2\cprn{x} := \exp\cprn{\sum_{i,j \neq 1} w_{i,j}^{\prn{2}}x_ix_j+\sum_{i\neq1}h_i^{\prn{2}}x_i},
\]
for all $x$, and note that $Z_1 = \sum_{x} E_1\cprn{x}$ and $Z_2 = \sum_{x} E_2\cprn{x}$.
We have
\begin{align*}
\sum_{x:x_1=1}E_1\cprn{x}
&=\sum_{x:x_1=1}\exp\cprn{\sum_{i,j}w_{i,j}x_ix_j+\sum_ih_ix_i}\\
&=\sum_{x:x_1=1}\exp\cprn{\sum_{i,j\neq1}w_{i,j}x_ix_j+\sum_iw_{1,i}x_i+\sum_{i\neq1}h_ix_i+h_1}\\
&=\sum_{x:x_1=1}\exp\cprn{\sum_{i,j\neq1}w_{i,j}x_ix_j+\sum_{i\neq1}\prn{w_{1,i}+h_i}x_i+h_1}\\
&=\sum_{x:x_1=1}\exp\cprn{\sum_{i,j\neq1}w_{i,j}x_ix_j+\sum_{i\neq1}\prn{w_{1,i}+h_i}x_i}\exp\cprn{h_1}\\
&=\exp\cprn{h_1}\sum_{x:x_1=1}\exp\cprn{\sum_{i,j\neq1}w_{i,j}x_ix_j+\sum_{i\neq1}\prn{w_{1,i}+h_i}x_i}\\
&=\exp\cprn{h_1}\sum_{x:x_1=1}\exp\cprn{\sum_{i,j\neq1}w_{i,j}^{\prn{2}}x_ix_j+\sum_{i\neq1}h_i^{\prn{2}}x_i}\\
&=\exp\cprn{h_1}\sum_{x}\exp\cprn{\sum_{i,j\neq1}w_{i,j}^{\prn{2}}x_ix_j+\sum_{i\neq1}h_i^{\prn{2}}x_i}\\
&=\exp\cprn{h_1}\sum_{x}E_2\cprn{x}\\
&=\exp\cprn{h_1}Z_2.
\end{align*}
Moreover,
\[
\cprq{x_1=1}{P_1}
=\frac{1}{Z_1}\sum_{x:x_1=1}E_1\cprn{x}
\qquad
\text{or}
\qquad
Z_1
=\frac{1}{\cprq{x_1=1}{P_1}}\sum_{x:x_1=1}E_1\cprn{x}.
\]
Combining the above, we have that
\[
Z_1
=\frac{1}{\cprq{x_1=1}{P_1}}\sum_{x:x_1=1}E_1\cprn{x}
=Z_2\frac{\exp\cprn{h_1}}{\cprq{x_1=1}{P_1}}.
\]
This equality yields a natural recursive algorithm for computing $Z_1$, whereby the computation of $Z_1$ is reduced to the computation of $Z_2$ by making use of the algorithm $M$ that estimates the marginal $\cprq{x_1=1}{P_1}$.
This continues until we reach in the recursion some Ising model $P_n$ over the variable $x_n$.
At this point the computation of $Z_n$ is trivial, as
\[
Z_n = \exp\cprn{h_n^{\prn{n}}}+\exp\cprn{-h_n^{\prn{n}}}.
\]
We will now bound the approximation and confidence errors of this algorithm.
Note that
\[
Z_1
=Z_2\frac{\exp\cprn{h_1}}{\cprq{x_1=1}{P_1}}
=\cdots
=Z_n\frac{\exp\cprn{h_1}\cdot\exp\cprn{h_2^{\prn{2}}}\cdot\ldots\cdot\exp\cprn{h_{n-1}^{\prn{n-1}}}}{\cprq{x_1=1}{P_1}\cdot\cprq{x_2=1}{P_2}\cdot\ldots\cdot\cprq{x_{n-1}=1}{P_{n-1}}}
\]
whereby each atomic marginal in the denominator is approximated with a $\prn{1+\varepsilon_0}$ ratio for $\varepsilon_0 := \varepsilon / n$ by making use of $M$.
This yields the desired approximation ratio of
\[
\prn{1+\varepsilon_0}^{n-1}
\leq\prn{1+\varepsilon_0}^n
\leq1+n\cdot\varepsilon_0
=1+n\cdot\frac{\varepsilon}{n}
=1+\varepsilon.
\]
We similarly argue for the confidence ratio $\delta_0:=\delta/n$ that yields the desired confidence ratio of $1-\delta$.
That is, the outlined reduction is approximation preserving.
What is left is to bound the running time of this recursive procedure.
Let $T\cprn{n}$ be the running time in question.
We have that
\[
T\cprn{n}
\leq T\cprn{n-1}+t_M\cprn{n,1/\varepsilon_0,1/\delta_0}+O\cprn{1}
\]
or
\(
T\cprn{n}
=O\cprn{n\cdot t_M\cprn{n,1/\varepsilon_0,1/\delta_0}}
=O\cprn{\cpoly{n,1/\varepsilon_0,1/\delta_0}}.
\)
This concludes the proof.
\end{proof}

\subsection{Reducing the Atomic Marginals to TV Distance}

We prove the following.

\begin{proposition}
\label{thm:marginals-to-tv-distance}
If there is some FPRAS that estimates the TV distance between any two Ising models, then there is some FPRAS that estimates any atomic marginal of any Ising model $P$, namely $\cprq{x_k=\pm1}{P}$ for any variable $x_k$.
\end{proposition}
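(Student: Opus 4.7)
My plan is to construct, from a given Ising model $P$ and a target variable $x_k$, a second Ising model $Q$ such that $\dtv(P, Q)$ is a multiplicative $(1 + \varepsilon)$-approximation to the atomic marginal $p := \cprq{x_k = 1}{P}$. Invoking the hypothesised FPRAS $\mathcal{A}$ on the pair $(P, Q)$ then yields an FPRAS for $p$. The case of $\cprq{x_k = -1}{P}$ follows by the dual shift described below.

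I would take $Q$ to be identical to $P$ except that its external field on $x_k$ is replaced by $h_k - M$ for a shift $M > 0$ to be chosen. Intuitively, $Q$ biases $x_k$ toward $-1$, so that as $M \to \infty$ the distribution $Q$ approaches the conditional distribution of $P$ given $x_k = -1$, whose TV distance from $P$ is exactly $p$. A short computation yields $Z_Q = Z_P \cprn{p e^{-M} + (1-p) e^M}$ and shows that the maximising event in the definition of the TV distance is $\brkts{x : x_k = 1}$, leading to the closed form
\[
\dtv(P, Q)
= p \cdot \cprn{1 - \frac{1}{p + (1-p) e^{2M}}}.
\]
Thus the multiplicative slack $\cprn{p - \dtv(P, Q)} / p$ equals $1 / \cprn{p + (1-p) e^{2M}}$, which is bounded above by $e^{-2M} / (1-p)$.

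To turn this into an FPRAS I would choose $M$ so that this slack is at most $\varepsilon/3$, run $\mathcal{A}$ on $(P, Q)$ with error parameters $\varepsilon/3$ and $\delta$, and return its output. For any Ising model with parameter magnitudes at most $W$, the crude bound $\min_x E_P(x) / Z_P \geq e^{-O(n^2 W)}$ gives $1 - p \geq e^{-O(n^2 W)}$, so $M = O\cprn{n^2 W + \log(1/\varepsilon)}$ suffices. The subtle point, which I expect to be the main obstacle, is that this $M$ may be exponentially large in the magnitudes of the parameters of $P$ whenever $p$ is exponentially close to $1$; however, since $M$ is encoded in binary, the description of $Q$ remains of polynomial size, and composing the multiplicative slack from the construction with the multiplicative error of $\mathcal{A}$ yields the required $(1 + \varepsilon)$-approximation of $p$ in polynomial time.
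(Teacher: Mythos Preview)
Your proposal is correct and in fact takes a cleaner route than the paper's own argument. The paper introduces an auxiliary spin $x_0$, makes it independent of the rest with a very negative field $h_0$, and then couples it to $x_k$ via a large interaction $\delta$; the maximising event for $\dtv(P_0,Q_0)$ is then $\{x_0 x_k=-1\}$, which collapses to $\{x_k=1\}$ only in the limit $h_0\to-\infty$, so the analysis has to control two limiting parameters ($h_0$ and $\delta$) and several cross terms. Your construction bypasses the dummy variable entirely: shifting only $h_k$ by $-M$ already forces the maximising event to be exactly $\{x_k=1\}$ for every finite $M>0$, and yields the neat closed form $\dtv(P,Q)=p\bigl(1-1/(p+(1-p)e^{2M})\bigr)$, from which the multiplicative slack is read off directly. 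The quantitative endgame is the same in both proofs: one lower-bounds $1-p$ (resp.\ $p$) by $e^{-O(n^2W+nH)}$ so that the required shift has magnitude polynomial in $n,W,H,\log(1/\varepsilon)$ and hence bit-length polynomial in the input size.

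One small wording issue: you write that $M$ ``may be exponentially large in the magnitudes of the parameters of $P$'', but in fact $M=O(n^2W+nH+\log(1/\varepsilon))$ is only \emph{polynomial} in those magnitudes; the point (which you then handle correctly) is that $W,H$ themselves can be exponential in the input bit-length, yet the binary encoding of $M$ stays polynomial. With that phrasing tightened, your argument is complete and arguably simpler than the paper's.
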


\begin{proof}
Assume that there is some FPRAS that estimates the TV distance between any two Ising models.
We will show that there is some FPRAS that estimates the atomic marginals of any Ising model $P$, namely $\cprq{x_k=\pm1}{P}$ for any variable $x_k$.
Let $\varepsilon$ be the desired accuracy error of the FPTAS that estimates the atomic marginals of any Ising model $P$.

Fix some Ising model $P$ with parameters $\brkts{w_{i,j}}_{i < j}$ and $\brkts{h_i}_i$.
We shall first introduce a new dummy variable $x_0$.
Let $P_0$ be a new Ising model over $x_0, \dots, x_n$ with parameters $\brkts{w_{i,j}}_{i < j}$ and $\brkts{h_i}_i$ so that $w_{0,i}=0$ for all $i>0$ and $h_0\to-\infty$ (that is, $h_0$ is a small negative quantity which we will fix later).
Note that under these conditions $x_0$ is independent from every other node $x_1,\dots,x_n$.
Moreover,
\begin{align*}
\cprq{x_0=1}{P_0}
&=\sum_{x:x_0=1} P_0\cprn{x}\\
&=\frac{\sum_{x:x_0=1}\exp\cprn{\sum_{i,j}w_{i,j}x_ix_j+\sum_ih_ix_i}}{\sum_x\exp\cprn{\sum_{i,j}w_{i,j}x_ix_j+\sum_ih_ix_i}}\\
&=\frac{\sum_{x:x_0=1}\exp\cprn{\sum_{i,j}w_{i,j}x_ix_j+h_0x_0+\sum_{i>0}h_ix_i}}{\sum_x\exp\cprn{\sum_{i,j}w_{i,j}x_ix_j+h_0x_0+\sum_{i>0}h_ix_i}}\\
&=\frac{\sum_{x:x_0=1}\exp\cprn{\sum_{i,j}w_{i,j}x_ix_j+h_0+\sum_{i>0}h_ix_i}}{\sum_x\exp\cprn{\sum_{i,j}w_{i,j}x_ix_j+h_0x_0+\sum_{i>0}h_ix_i}}\\
&=\frac{\exp\cprn{h_0}\sum_{x:x_0=1}\exp\cprn{\sum_{i,j}w_{i,j}x_ix_j+\sum_{i>0}h_ix_i}}{\sum_x\exp\cprn{\sum_{i,j}w_{i,j}x_ix_j+h_0x_0+\sum_{i>0}h_ix_i}}\\
&=\frac{\exp\cprn{h_0}\sum_x\exp\cprn{\sum_{i,j}w_{i,j}x_ix_j+\sum_{i>0}h_ix_i}}{\sum_x\exp\cprn{\sum_{i,j}w_{i,j}x_ix_j+h_0x_0+\sum_{i>0}h_ix_i}}.
\end{align*}
Let now us define $E\cprn{x}:=\exp\cprn{\sum_{i,j}w_{i,j}x_ix_j+\sum_{i>0}h_ix_i}$ for all $x$.
We have
\begin{align*}
\cprq{x_0=1}{P_0}
&=\frac{\exp\cprn{h_0}\sum_xE\cprn{x}}{\exp\cprn{h_0}\sum_{x:x_0=1}E\cprn{x}+\exp\cprn{-h_0}\sum_{x:x_0=-1}E\cprn{x}}\\
&=\frac{\exp\cprn{h_0}\sum_xE\cprn{x}}{\exp\cprn{h_0}\sum_{x}E\cprn{x}+\exp\cprn{-h_0}\sum_{x}E\cprn{x}}\\
&=\frac{\exp\cprn{2h_0}\sum_xE\cprn{x}}{\exp\cprn{2h_0}\sum_{x}E\cprn{x}+\sum_{x}E\cprn{x}}\\
&=\frac{\exp\cprn{2 h_0}}{\exp\cprn{2 h_0} + 1}.
\end{align*}
We shall now define another Ising model $Q_0$ over $x_0, \dots, x_n$ as follows.
The model $Q_0$ has parameters $\brkts{w_{i,j}'}_{i < j}$ and $\brkts{h_i'}_i$ so that $w_{i,j}':=w_{i,j}$ if $\prn{i,j}\neq\prn{0,k}$ and $w_{0,k}':=w_{0,k}+\delta$ for some $\delta>1$, and $h_i'=h_i$ for all $i$.
We have that for all $x\in\boool^n$ it is the case that
\begin{align*}
Q_0\cprn{x}
&\propto\exp\cprn{\sum_{i,j}w_{i,j}'x_ix_j+\sum_ih_i'x_i}\\
&=\exp\cprn{\sum_{i,j}w_{i,j}'x_ix_j+\sum_ih_ix_i}\\
&=\exp\cprn{\prn{w_{0,k}+\delta}x_0x_k+\sum_{\prn{i,j}\neq\prn{0,k}}w_{i,j}x_ix_j+\sum_ih_ix_i}\\
&=\exp\cprn{w_{0,k}x_0x_k+\delta x_0x_k+\sum_{\prn{i,j}\neq\prn{0,k}}w_{i,j}x_ix_j+\sum_ih_ix_i}\\
&=\exp\cprn{\delta x_0x_k+\sum_{i,j}w_{i,j}x_ix_j+\sum_ih_ix_i}\\
&=\exp\cprn{\delta x_0x_k}\exp\cprn{\sum_{i,j}w_{i,j}x_ix_j+\sum_ih_ix_i}\\
&=\exp\cprn{\delta x_0x_k}P_0\cprn{x} Z_{P_0},
\end{align*}
whereby $Z_{P_0}$ is the partition function of $P_0$.
If $x_0 x_k =1$, then $Q_0\cprn{x}\propto\exp\cprn{\delta}P_0\cprn{x}$; else $Q_0\cprn{x}\propto\exp\cprn{-\delta}P_0\cprn{x}$.
Moreover, for any $x$ such that $x_0x_k=1$,
\begin{align*}
Q_0\cprn{x}
&=\frac{\exp\cprn{\delta}P_0\cprn{x}Z_{P_0}}{Z_{Q_0}} \\
&=\frac{\exp\cprn{\delta}P_0\cprn{x}Z_{P_0}}{\sum_{x:x_0x_k=1} \exp\cprn{\delta}P_0\cprn{x}Z_{P_0} + \sum_{x:x_0x_k=-1} \exp\cprn{-\delta}P_0\cprn{x}Z_{P_0}} \\
&=\frac{P_0\cprn{x}}{\sum_{x:x_0x_k=1} P_0\cprn{x} + \exp\cprn{-2\delta} \sum_{x:x_0x_k=-1} P_0\cprn{x}} \\
&\geq \frac{P_0\cprn{x}}{\sum_{x:x_0x_k=1} P_0\cprn{x} + \sum_{x:x_0x_k=-1} P_0\cprn{x}} \\
&=\frac{P_0\cprn{x}}{\sum_{x} P_0\cprn{x}} \\
&\geq P_0(x)
\end{align*}
where $Z_{Q_0}$ is the partition function of ${Q_0}$.
Similarly, for any $x$ such that $x_0x_k=-1$,
\begin{align*}
{Q_0}\cprn{x}
&=\frac{\exp\cprn{-\delta}P_0\cprn{x}Z_{P_0}}{Z_{Q_0}} \\
&=\frac{\exp\cprn{-\delta}P_0\cprn{x}Z_{P_0}}{\sum_{x:x_0x_k=1} \exp\cprn{\delta}P_0\cprn{x}Z_{P_0} + \sum_{x:x_0x_k=-1} \exp\cprn{-\delta}P_0\cprn{x}Z_{P_0}} \\
&=\frac{P_0\cprn{x}}{\sum_{x:x_0x_k=1} \exp\cprn{2\delta}P_0\cprn{x} + \sum_{x:x_0x_k=-1} P_0\cprn{x}} \\
&< \frac{P_0\cprn{x}}{\sum_{x:x_0x_k=1} P_0\cprn{x} + \sum_{x:x_0x_k=-1} P_0\cprn{x}} \\
&= P_0(x).
\end{align*}
That is, $P_0\cprn{x}\geq {Q_0}\cprn{x}$ if and only if $x_0x_k=-1$.
We now have
\begin{align*}
\dtv\cprn{P_0,{Q_0}}
&=\sum_x\max\cprn{0,P_0\cprn{x}-{Q_0}\cprn{x}}\\
&=\sum_{x:P_0\prn{x}\geq {Q_0}\prn{x}}\prn{P_0\cprn{x}-{Q_0}\cprn{x}}\\
&=\sum_{x:x_0x_k=-1}\prn{P_0\cprn{x}-{Q_0}\cprn{x}}\\
&=\sum_{x:x_0x_k=-1}P_0\cprn{x} - \sum_{x:x_0x_k=-1}{Q_0}\cprn{x}.
\end{align*}
Moreover,
\begin{align*}
\sum_{x:x_0x_k=-1}P_0\cprn{x}
&=\cprq{x_0x_k=-1}{P_0}\\
&=\cprq{x_0=1,x_k=-1}{P_0}+\cprq{x_0=-1,x_k=1}{P_0}\\
&=\cprq{x_0=1}{P_0}\cprq{x_k=-1}{P_0}
+\cprq{x_0=-1}{P_0}\cprq{x_k=1}{P_0}\\
&=\cprq{x_0=1}{P_0}\cprq{x_k=-1}{P}
+\cprq{x_0=-1}{P_0}\cprq{x_k=1}{P}\\
& = \cprq{x_0=1}{P_0} + \prn{1 - 2 \cprq{x_0=1}{P_0}} \cprq{x_k=1}{P}\\
& = \frac{\exp\cprn{2 h_0}}{\exp\cprn{2 h_0} + 1}
+ \frac{1 - \exp\cprn{2 h_0}}{\exp\cprn{2 h_0} + 1} \cprq{x_k=1}{P},
\end{align*}
whereby we have used the fact that $\cprq{x_k=\pm1}{P_0} = \cprq{x_k=\pm1}{P}$.
That is,
\begin{align}
\dtv\cprn{P_0,{Q_0}}
= \frac{\exp\cprn{2 h_0}}{\exp\cprn{2 h_0} + 1} - \sum_{x:x_0x_k=-1}{Q_0}\cprn{x} + \frac{1 - \exp\cprn{2 h_0}}{\exp\cprn{2 h_0} + 1} \cprq{x_k=1}{P}.
\label{eq:dtv}
\end{align}
Note that
\begin{align*}
0 \leq \sum_{x:x_0x_k=-1}{Q_0}\cprn{x}
= \sum_{x:x_0x_k=-1} \frac{\exp\cprn{-\delta} P_0\cprn{x} Z_{P_0}}{Z_{Q_0}}
&\leq \frac{\exp\cprn{-\delta} Z_{P_0}}{Z_{Q_0}}.
\end{align*}
Intuitively, note that if (in the above) we set $\delta \to \infty$ and $h_0 \to - \infty$, then $\dtv\cprn{P_0, {Q_0}} = \cprq{x_k = 1}{P}$, and an approximation of $\dtv\cprn{P_0, {Q_0}}$ implies an approximation of $ \cprq{x_k = 1}{P}$.
Now we make this formal by finitely quantifying $h_0,\delta$.

It would suffice to show that
\[
\abs{\dtv\cprn{P_0,{Q_0}} - \cprq{x_k = 1}{P}}
\leq \frac{\varepsilon}{2}\cdot \cprq{x_k = 1}{P}.
\]
Then this could be combined with an $\frac{\varepsilon}{2}$-multiplicative approximation of $\dtv\cprn{P_0, {Q_0}}$ to yield the desired $\varepsilon$-multiplicative approximation of $\cprq{x_k = 1}{P}$.
By \Cref{eq:dtv},
\begin{align*}
\abs{\dtv\cprn{P_0,{Q_0}} - \cprq{x_k = 1}{P}}
& \leq \frac{\exp\prn{2 h_0}}{\exp\prn{2 h_0} + 1} + \frac{\exp\prn{-\delta} Z_{P_0}}{Z_{Q_0}} + \frac{\exp\prn{2 h_0}}{\exp\prn{2 h_0} + 1} \cprq{x_k = 1}{P} \\
& \leq 2\exp\cprn{2 h_0} + \frac{\exp\prn{-\delta} Z_{P_0}}{Z_{Q_0}}.
\end{align*}
Let $W := \max_{i,j} |w_{i,j}|$ and $H := \max_i |h_i|$.
Then for any $x$,
\begin{align*}
\cprq{x_k=1}{P}
& = \sum_{x:x_k = 1} \frac{\exp\cprn{\sum_{i,j}w_{i,j}x_ix_j+\sum_ih_ix_i}}{\sum_y \exp\cprn{\sum_{i,j} w_{i,j} y_i y_j + \sum_i h_i y_i}} \\
& \geq \frac{2^{n - 1} \exp\cprn{-W\prn{n+1}^2-\prn{n+1}H}}{2^{n + 1}\exp\cprn{W\prn{n+1}^2+\prn{n+1}H}} \\
& = \frac{\exp\cprn{-W\prn{n+1}^2 - \prn{n+1}H}}{4\exp\cprn{W\prn{n+1}^2+\prn{n+1}H}}.
\end{align*}
Choosing $h_0$ and $\delta$ such that
\[
\max\cprn{-h_0, \delta} =\Omega\cprn{\cpoly{n, H, W, 1 / \varepsilon, Z_{P_0}/{Z_{Q_0}}}} = \Omega\cprn{\cpoly{n, H, W, 1 / \varepsilon}}
\]
(note that $Z_{P_0}/{Z_{Q_0}}$ can be bounded by some polynomial in ${n, H, W}$), we may ensure the desired
\[
\frac{\exp\cprn{-W\prn{n+1}^2-\prn{n+1}H}}{4\exp\cprn{W\prn{n+1}^2+\prn{n+1}H}}
\geq \frac{2}{\varepsilon} \prn{2\exp\cprn{2 h_0} + \frac{\exp\prn{-\delta} Z_{P_0}}{Z_{Q_0}}}.
\]
An identical to the above argument can be employed to get a multiplicative approximation of $\cprq{x_k = - 1}{P}$ (by letting $h_0 \to \infty$ above).

Since $\dtv\cprn{P_0,{Q_0}}$ multiplicatively approximates $\cprq{x_k = \pm 1}{P}$, this is an approximation preserving reduction, as any multiplicative approximation of $\dtv\cprn{P_0,{Q_0}}$ is a multiplicative approximation of $\cprq{x_k = \pm 1}{P}$.
\end{proof}

\subsection{Proof of \texorpdfstring{\Cref{thm:ising-hardness}}{}}

To prove \Cref{thm:ising-hardness} we shall require the following result of \cite{JS93}.

\begin{theorem}[\cite{JS93}]
\label{thm:JS93}
If $\NP\not\subseteq\RP$, then there is no FPRAS that estimates the partition function of any Ising model.
\end{theorem}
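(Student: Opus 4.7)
The plan is to reduce an $\NP$-hard decision problem, namely the decision version of \textsc{MaxCut}, to approximating the partition function of an antiferromagnetic Ising model. Given a graph $G = (V, E)$ on $n$ vertices and a threshold $k$, I would build the Ising model on spin variables $x \in \boool^V$ with couplings $w_{i,j} = -B$ for each $\brkts{i, j} \in E$ (and $w_{i,j} = 0$ otherwise), zero external fields $h_i = 0$, and $B$ a positive parameter polynomial in $n$ to be chosen below. Writing $c(x)$ for the number of edges cut by $x$, an edge $\brkts{i,j}$ contributes $-1$ to $x_i x_j$ when cut and $+1$ when monochromatic, so $\sum_{i,j} w_{i,j} x_i x_j = 2B \cdot c(x) - B \abs{E}$ and therefore
\[
Z \;=\; \exp\cprn{-B \abs{E}} \sum_{x \in \boool^n} \exp\cprn{2B \cdot c(x)}.
\]
Sandwiching the sum between its largest term and $2^n$ copies of it yields $2Bc^* \le \log Z + B\abs{E} \le 2Bc^* + n \log 2$, where $c^*$ is the \textsc{MaxCut} value of $G$.

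Next, I would show that an FPRAS for $Z$ pins down $c^*$ exactly. A $(1 + \varepsilon)$-multiplicative approximation $\tilde Z$ of $Z$ satisfies $\abs{\log \tilde Z - \log Z} \le \log(1 + \varepsilon) \le \varepsilon$, hence $(\log \tilde Z + B\abs{E}) / (2B)$ approximates $c^*$ with additive error at most $(n \log 2 + \varepsilon)/(2B)$. Taking $B = n$ and $\varepsilon = 1/2$ shrinks this slack strictly below $1$; since $c^*$ is an integer, rounding recovers it. This already places \textsc{MaxCut} in $\mathsf{BPP}$, but the theorem asks for the stronger conclusion $\NP \subseteq \RP$, so I would upgrade the procedure to one-sided error via approximate sampling.

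The upgrade uses the classical Jerrum--Valiant--Vazirani reduction: an FPRAS for $Z$ extends, by recursive estimation of the marginals $\cpr{x_i = +1 \mid x_1, \dots, x_{i-1}}$ as ratios of partition functions of conditioned Ising models (conditioning on $x_i = \pm 1$ produces another Ising model in which the fixed spin is absorbed into the fields of its neighbors), to an efficient approximate sampler from the Gibbs distribution $\pi(x) \propto \exp\cprn{2B c(x)}$, with products of $(1 \pm \varepsilon/n)$-approximations yielding samples within $1/\poly{n}$ total variation of $\pi$. Choosing $B$ to be a sufficiently large polynomial in $n$, the Gibbs distribution is supported on \textsc{MaxCut} configurations up to inverse exponential error, so a single approximate sample $x$ satisfies $c(x) = c^*$ with probability at least $2/3$. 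The $\RP$ algorithm for \textsc{MaxCut} decision is then: sample $x$ and accept iff $c(x) \ge k$. If $c^* < k$, no cut of size $\ge k$ exists and we never accept; if $c^* \ge k$ we correctly accept with probability at least $2/3$, which can be amplified. Since \textsc{MaxCut} is $\NP$-complete, this yields $\NP \subseteq \RP$.

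The main technical obstacle will be the JVV step: carefully propagating the multiplicative errors across the $n$ rounds of self-reduction, and verifying that every conditioned Ising model encountered has parameters polynomially bounded in the original input so that each FPRAS call runs in time polynomial in $n$. A secondary but related point is the calibration of $B$: it must be large enough both to force the Gibbs measure to concentrate on optimum cuts and to make $c^*$ recoverable from $\log Z$, yet still polynomial in $n$ so that all Ising models produced along the way have description size polynomial in $n$ and the FPRAS itself remains polynomial-time on them.
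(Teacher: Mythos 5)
Your statement is one the paper does not prove at all: it is imported verbatim as a citation to Jerrum and Sinclair, so there is no in-paper argument to compare against. What you have written is essentially a reconstruction of the original Jerrum--Sinclair hardness proof, and it is correct in outline: encoding \textsc{MaxCut} with antiferromagnetic couplings $w_{i,j}=-B$, reading $c^*$ off $\log Z$ via the sandwich $2Bc^*\le \log Z + B\abs{E}\le 2Bc^*+n\log 2$, and choosing $B=n$, $\varepsilon=1/2$ all check out (one small point: "slack strictly below $1$" is not by itself enough for naive rounding --- you need the error below $1/2$, which your parameters do in fact give, since the deviation lies in roughly $[-1/(4n),\,0.35+1/(4n)]$). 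The one place where you work harder than necessary is the upgrade from two-sided to one-sided error. Your route through the Jerrum--Valiant--Vazirani counting-to-sampling reduction is sound --- conditioning a spin does yield another Ising model with the coupling absorbed into the fields, concentration of the Gibbs measure on optimal cuts at $B=n$ is an easy $2^n e^{-2B}$ bound, and the test $c(x)\ge k$ is a deterministic witness check, so no-instances are never accepted even when the FPRAS calls fail --- but it is heavier machinery than needed: once \textsc{MaxCut} decision is in $\mathsf{BPP}$, the standard conclusion $\NP\subseteq\RP$ follows directly from self-reducibility of an $\NP$-complete problem (find a candidate witness bit by bit using the $\mathsf{BPP}$ decision procedure, then verify it deterministically), which is the textbook argument usually invoked alongside JS93. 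So the proposal is correct modulo the error-propagation bookkeeping you already flag, but the sampling detour could be replaced by a two-line appeal to that standard fact.
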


We now turn to the proof of \Cref{thm:ising-hardness}.

\begin{proof}[Proof of \Cref{thm:ising-hardness}]
By \Cref{prop:partition-to-marginals}, if there exists some FPRAS that estimates any atomic marginal of any Ising model, then there exists some FPRAS that estimates the partition function of any Ising model.
By \Cref{thm:marginals-to-tv-distance}, if there exists some FPRAS that estimates the TV distance between Ising models, then there exists some FPRAS that estimates any atomic marginal of any Ising model.
That is, if there exists some FPRAS that estimates the TV distance between Ising models, then there exists some FPRAS that estimates the partition function of any Ising model.
By \Cref{thm:JS93}, if $\NP\not\subseteq\RP$, then there is no FPRAS that estimates the partition function of any Ising model.
That is, if $\NP\not\subseteq\RP$, then there is no FPRAS that estimates the TV distance between Ising models.
This concludes the proof.
\end{proof}

\section{Discussion}

\label{sec:conclusion}

We have shown how to efficiently check whether two mixtures of product distributions are equivalent or not.
One might wonder whether our ideas can be applied to check equivalence between mixtures of other, more expressive, kinds of distributions, such as Bayes nets.

Bayes nets constitute~\cite{pearl1989probabilistic} a probabilistic graphical model, with numerous applications in machine learning and inference, that naturally generalize product distributions in that they allow for dependencies among different variables.
\cite{bhattacharyya2023approximating} showed that it is $\NP$-hard to decide whether the total variation distance between two Bayes nets $P$ and $Q$ is equal to $0$ or not, so one cannot hope to extend our methods to testing equivalence between mixtures of Bayes net distributions (unless, of course, $\P = \NP$.)
Can our ideas be extended to testing equivalence between mixtures of some subclass of Bayes net distributions, such as Bayes net distributions whereby their underlying graph is a tree or has small treewidth?

Our second result, namely the hardness of approximating the TV distance between Ising models, helps us further understand the intricacies of the Ising model and the consequences of complexity-theoretic conjectures such as $\NP \not \subseteq \RP$.
Can we extend this complexity-theoretic hardness of approximation to other classes of probability distributions, such as factor graphs or general undirected probabilistic graphical models?

\section*{Acknowledgements}

DesCartes:
This research is supported by the National Research Foundation, Prime Minister’s Office, Singapore under its Campus for Research Excellence and Technological Enterprise (CREATE) programme.
This work was supported in part by National Research Foundation Singapore under its NRF Fellowship Programme [NRF-NRFFAI1-2019-0004] and an Amazon Research Award.
We acknowledge the support of the Natural Sciences and Engineering Research Council of Canada (NSERC), [funding reference number RGPIN-2024-05956].
The work of AB was supported in part by National Research Foundation Singapore under its NRF Fellowship Programme (NRF-NRFFAI-2019-0002) and an Amazon Faculty Research Award.
The work of SG was supported in part by the SERB award CRG/2022/007985.
Pavan's work is partly supported by NSF awards 2130536, 2342245, 2413849.
Vinod's work is partly supported by NSF awards 2130608, 2342244, 2413848.

\newcommand{\etalchar}[1]{$^{#1}$}

\end{document}